\begin{document}

\title{The Bouquet Algorithm for Model Checking Unbounded Until}

\author{Shiraj Arora \and M. V. Panduranga Rao }

\institute{Indian Institute of Technology Hyderabad\\
India \\
\{cs14resch11010, mvp\}@ iith.ac.in}

\maketitle

\begin{abstract}
The problem of verifying the ``Unbounded Until" fragment in temporal logic formulas has been studied extensively in the past, especially in the context
of statistical model checking.
Statistical model checking, a computationally inexpensive sampling based alternative to the more expensive numerical model checking technique,
presents the following decision dilemma--what length of the sample is enough in general?

In this paper, we discuss an algorithm for this problem that combines ideas from graph theory, statistical model checking and numerical model checking.
 We analyze the algorithm,  
and show through experiments that this approach outperforms the standard statistical model checking 
algorithm for verifying unbounded until for low density Discrete Time Markov Chains.
\end{abstract}

\begin{keywords}
Model Checking, Unbounded Until, Graph Reachability, Statistical Model Checking, Discrete-time Markov Chains
\end{keywords}

\section{Introduction}
Probabilistic model checking deals with algorithmic verification of properties desired of stochastic systems.
One useful formalism for modeling such systems, with which we will be concerned in the present work is the discrete time Markov chain (DTMC).
Properties to be verified are formally specified as formulas in temporal 
logics such as PCTL~\cite{HanssonJ94}. There are primarily two techniques to perform probabilistic model checking. 
Numerical model checking computes the exact solution, albeit at a prohibitive cost due to the state space explosion in the underlying model. 
On the other hand, sampling based statistical techniques works by executing finite-length 
runs of the DTMC and evaluating the temporal logic formula on each run. These techniques offer a trade-off between the desired accuracy and
time, in terms of the number of samples generated for analysis.

Most temporal logics contain path operators called \emph{bounded} and/or \emph{unbounded until} operators.
When the formula contains only bounded until operators, the length of the path to be sampled
can be made to depend on the time bounds present in these operators. However, for the unbounded until operator, we face the
dilemma of choosing an appropriate length for the path to be sampled. The problem of verifying properties with unbounded 
until operators has been explored using several approaches~\cite{roohi2015,agha2018}. 

In this paper, we utilize the graph structure underlying the DTMC to address this problem. Our target applications are those where the system has a stable
description--the DTMC is fixed--and an extended analysis needs to be carried out through several queries over a course of time. Crucial to our approach
is the question of whether or not there are at least $k$ number of nodes reachable from any given node of the DTMC. This information can either be pre-annotated
or can be progressively annotated as formula evaluations proceed. Indeed, this annotated DTMC has to be stored across formula evaluations (as long as the system needs to be analyzed) for fully benefiting 
from the annotations done while evaluating previous formulas. The benefits in terms of speed is maximized when the graph is completely annotated.

The salient contributions of this work are twofold: (a) we propose a new algorithm, which we call the bouquet algorithm, for dealing with the 
unbounded until dilemma in statistical model 
checking using the structure of the graph underlying the DTMC, sampling and numerical model checking.  To the best of our knowledge, this approach is 
new, and (b) we show improved performance for low density DTMCs. Indeed, we give empirical evidence that in the case of completely annotated DTMCs,
the bouquet algorithm outperforms the standard statistical model checking algorithm.

Clearly, this approach suffers from the disadvantage of having to generate the entire DTMC explicitly, as do some other approaches reported in literature.
In such cases, extremely large DTMCs may need to be stored on a slower memory drive and consequently, there is a significant overhead of I/O operations 
while performing statistical model checking. We argue that the bouquet algorithm cuts this expense down.

The paper is arranged as follows: Section \ref{preliminaries} introduces some preliminaries and discusses previous work, 
section \ref{bouquet} describes the bouquet algorithm and analyzes the performance. 
Section \ref{experiments} discusses experimental results for low density DTMCs. Section \ref{conclusion} concludes the paper.

\section{Preliminaries and Previous Work}\label{preliminaries}

A Discrete Time Markov Chain (DTMC) is a Markov process defined in discrete time and 
described as a  tuple $M : (S, s_{init},\mathbb{P},AP,L) $ where \begin{itemize}
\item $S$ is a finite non-empty set 
of states
\item $\mathbb{P}:S\times S \rightarrow [0,1]$ gives the transition probability 
between two states in $S$ such that $\forall s\in S : \sum_{s'\in S} \mathbb{P}(s,s') = 1$,
\item $s_{init} \in S$ is the initial state, (in general it is a probability distribution over a subset $S_{init}$ of initial states.
We restrict our discussion to the case of a single initial state, for simplicity.)
\item $AP$ is a set of atomic propositions and 
\item $L:S \rightarrow2^{AP}$ is a labeling function.
\end{itemize}
Fig.~\ref{dtmc} illustrates an example DTMC with state space $S=\{0,1,2,3,4,5\}$ and atomic propositions 
$AP=\{p,q,r\}$. 

\begin{figure}
    \centering
    \includegraphics[width=0.6\linewidth]{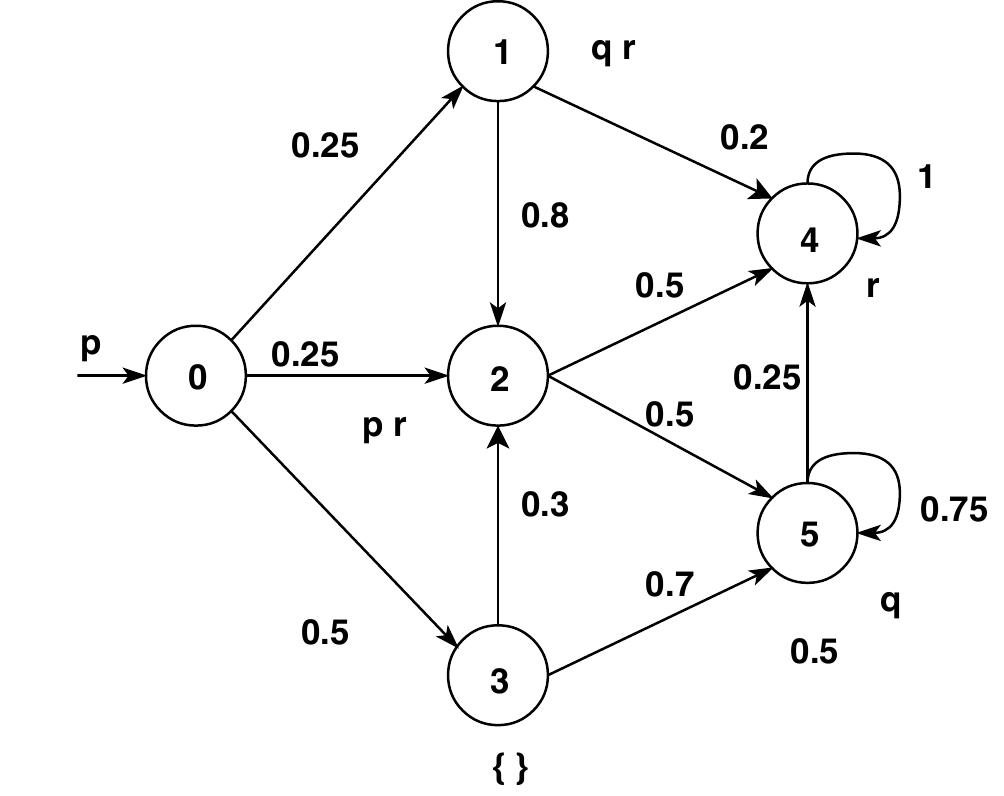}
    \caption{An example DTMC}
    \label{dtmc}
\end{figure}

A path $\pi$ in a DTMC M is a sequence of states $s_0, s_1, s_2 \ldots$
such that for all $i = 0, 1, 2, \ldots s_i \in S$ and $\mathbb{P}(s_i,s_{i+1}) > 0$. The $i + 1$th state in a
path $\pi$ is denoted by $\pi[i]$. We use the terms paths, \emph{samples} and \emph{traces} interchangeably in this paper.
%The set $Path(s)$ is the set of all infinite paths starting
%from state $s$ in $M$.

Given a labeling of atomic propositions to a state $s$, we can talk of boolean formulas constructed from atomic propositions and boolean connectives 
(OR and not). These formulas are evaluated per state. If the atomic proposition assignment at a state $s$ results in such a boolean \emph{state} formula $\Phi$ 
being evaluated to true, we write $s\vDash \Phi$.

The \emph{path} formula \emph{unbounded until}, an important fragment of temporal logics like PCTL and LTL, is written as: 
$\psi\ ::= \ \Phi_{1} U\ \Phi_{2}$, where $\Phi_1$ and $\Phi_2$ are state formulas as defined above.
The semantics of the unbounded until is straightforward:
a path $\pi$ satisfies $\Phi_1 U \Phi_2$, written as $\pi\vDash \Phi_1 U \Phi_2$ iff 
	\[ \exists i \geq 0\ |\ \pi[i] \vDash \Phi_{2} \land 
	\ \forall j < i, \pi[j]\vDash \Phi_{1}\] 

The bounded version of this operator, denoted  $\psi\ ::= \ \Phi_{1} U^{\leq t}\ \Phi_{2}$ has the following semantics:
	\[ \pi \vDash \Phi_{1} U^{\leq t} \Phi_{2} \textrm{ iff } \exists i \geq 0 \textrm{ and } \leq t \mid \ \pi[i] \vDash \Phi_{2} \land 
	\ \forall j < i, \pi[j]\vDash \Phi_{1}\] 

\subsection{Probabilistic Model Checking}
Two major techniques for (probabilistic) model checking of stochastic systems are numerical model checking and statistical model checking. We now briefly
discuss salient aspects of these two techniques. 
	Numerical Model checking~\cite{courcoubetis88,HanssonJ94,ModelCheck} is a graph reachability based verification technique
	that computes the exact probability of a (e.g PCTL) formula being true. The crux of the technique is to perform a bottom-up traversal of the syntax tree of the PCTL state formula and 
	identify the states that satisfy the formula. Numerical model checking is useful in verifying a variety of 
	stochastic systems that can be modeled as discrete and continuous time Markov chains, Markov Decision Processes, etc. 
The computation cost of model checking discrete-time 
	Markov chains using these techniques is polynomial in terms of the
	size of the input model as well as input property. However, this approach
	becomes expensive for large systems that suffer from state space explosion.
		
Statistical 
	model checking is a sampling based technique which executes multiple runs of the input system. Statistical model
	checking algorithms are mainly of two categories: Hypothesis testing based~\cite{sen2004,Younes,sen05} and Estimation based~\cite{hoeffding1994,herault2004,nimal2010}. While estimation 
	based algorithms seek to calculate the probability of satisfying a given property with some loss in accuracy, hypothesis testing 
	based algorithms check if the probability meets the required threshold or not. 
	Statistical Model Checking~\cite{sen2004,Younes,sen05} is a faster alternative to verify the property of the input system at the cost of 
	accuracy. 
	Younes et al.~\cite{NumVsStat} provide a detailed comparison between numerical and statistical algorithms to verify the temporal properties.  

In this paper, we will focus on estimation based statistical model checking. The Statistical Model Checking (SMC) algorithm proceeds as 
follows: A sample trace $\pi_i$ of a maximum length $maxPathLength$ is generated and assigned a value $b_i=1$ if the unbounded until
formula is satisfied in $\pi_i$ and $b_i=0$ otherwise. The probability estimate $p'$ of satisfying the unbounded until formula after generating $N$
sample traces of the DTMC is: 
\[p'=\frac{1}{N} \sum_{i=1}^n b_i\]
The Chernoff-Hoeffding bound~\cite{hoeffding1994,herault2004}
is then used to compute the number of samples needed to estimate the resulting probability with a desired accuracy. If $p$ is the actual
probability of the formula being satisfied. Then, to achieve $Prob[|p-p'|\leq \epsilon]\geq 1-\delta$ for $\epsilon,\delta>0$, then the
Chernoff-Hoeffding bound requires that the number of  
samples $N$ needed is given by  
\begin{equation}
        N \geq \frac{ln(\frac{2}{\delta})}{2\epsilon^2}
\end{equation}

	One problem that is encountered in statistically verifying a logic that contains the unbounded until fragment
is the dilemma of when to stop a run. A simple option is to set a limit on the length
of the run. If the formula is not evaluated conclusively (either true or false) before this length, it is classified as a false. This can  
potentially result in a loss in accuracy.
On the other hand, the bounded until offers a natural bound on the length of a sample run for the formula to be evaluated conclusively.
	%In this paper, we will focus on estimation based statistical model checking algorithms and particularly with be working with Chernoff-Hoeffding bounds~\cite{hoeffding1994,herault2004}
%	to compute the number of samples needed to estimate the resulting probability with given approximation parameter $\epsilon$ and confidence interval $\delta$. 
	
One of the first attempts to statistically verify unbounded until properties was by 
Sen et al.~\cite{sen05} where they introduced the notion of stopping probability $p_s$--at every state $s$ in a path there exists probability 
$p_s$ with which the 
generation of the trace terminates at the current state $s$. They 
also identify the set of states for which the probability of satisfying the 
unbounded until formula is zero, through sampling. They estimate the probability of the formula being true using Bernoulli trials.
Younes et al~\cite{younes2010} extend the 
concept of stopping probability by using non-Bernoulli trials 
to estimate the probability. While the stopping probability depends on the
size of the model in~\cite{sen05}, it depends on the subdominant eigenvalue of the 
transition matrix of the model in~\cite{younes2010}. 
Younes et. al also propose an algorithm in~\cite{younes2010} to identify the
set of states with zero probability of satisfying the unbounded until formula using 
reachability analysis. 

Rabih and Pekergin~\cite{rabih2009} and Lassaigne and Peyronnet~\cite{lassaigne2015} use the subdominant eigenvalue 
of the transition matrix for an ergodic Markov chain to estimate the upper 
bound on length of path for unbounded until. They then solve the unbounded 
until formula as a bounded until formula, with this estimate as the bound.
Basu et al.~\cite{basu2009} and He et al.~\cite{he2010} also convert the unbounded until to a bounded until 
formula by selecting an arbitrarily large time bound for the bounded operator
such that the resulting probability is same for both the formulas.

None of the above approaches take into account the structure of the Markov Chain. Daca et al. \cite{daca2017} 
proposed an algorithm that utilizes the minimum transition probability of the Markov chain
to identify the probable bottom  strongly connected components (BSCC) in a Markov chain. 
They execute sample runs only till they reach one of the states in a BSCC. 
This algorithm is the closest to the bouquet algorithm reported in this work.

\section {The Bouquet Algorithm}\label{bouquet}

The bouquet algorithm is essentially a hybrid algorithm that combines statistical and numerical model checking.
%In the underlying directed graph of the Markov chain, we identify the states that 
%have atmost $k$ reachable states. 
%We refer to these states as \textbf{\textit{flower}} states.  
The algorithm begins by sampling a trace in the DTMC as in the case of statistical model checking.
A trace of the DTMC $M$ is generated, starting from the input state, until it either satisfies (or rejects) 
the unbounded until formula, or a state $s_F$ is reached from which there are at most $k-1$ reachable states. 
In case of former, the result of the corresponding trace is evaluated to True (or False). In the latter case, 
we isolate the state $s_F$ and states reachable from $s$ to form an induced DTMC $M'$ with
at most $k$ states and $s_F$ as the initial state. The algorithm performs a numerical model checking procedure on 
$M'$.

%We then verify the same unbounded until formula
%for $M'$ using numerical model checking algorithm. The calculated probability is stored as 
%the result for the corresponding trace as well as the state $s$. This resulting probability is reused 
%for any subsequent visits to the state $s$. 

%We know that NMC outperforms SMC for small sized Markov Chains. Thus, in this paper, we utilise 
%the benefits of both NMC algorithms for Markov chains with smaller sized state space,
%and SMC algorithms for Markov chains with large state space.

We refer to the induced Markov chain $M'$ as the \textit{flower} $F$ rooted at $s_F$. 
Note that rooted at any state $s_F$ in a trace, there can be at most one flower.
A stalk for a flower $M'$ in $M$ is the finite-length path from the initial state in $M$
to $s_F$. 
A \emph{bouquet} $B$ is a set of $(flower,stalk)$ tuples. In a sense, generating the bouquet constitutes the 
bouquet algorithm, and hence the choice of the name.

Figure \ref{flower} shows part of the underlying directed graph of 
an input Markov chain $M$. If the value of $k$ is 4 for $M$ then $M_1$ and   
$M_2$ are flowers in $M$ with $s_{11}$ and $s_{15}$ as initial states. 
Thus, $s_0,s_2,s_6,s_1,s_6,s_{11}$ is a stalk for flower $M_1$.

\begin{figure}
    \centering
    \includegraphics[width=0.6\linewidth]{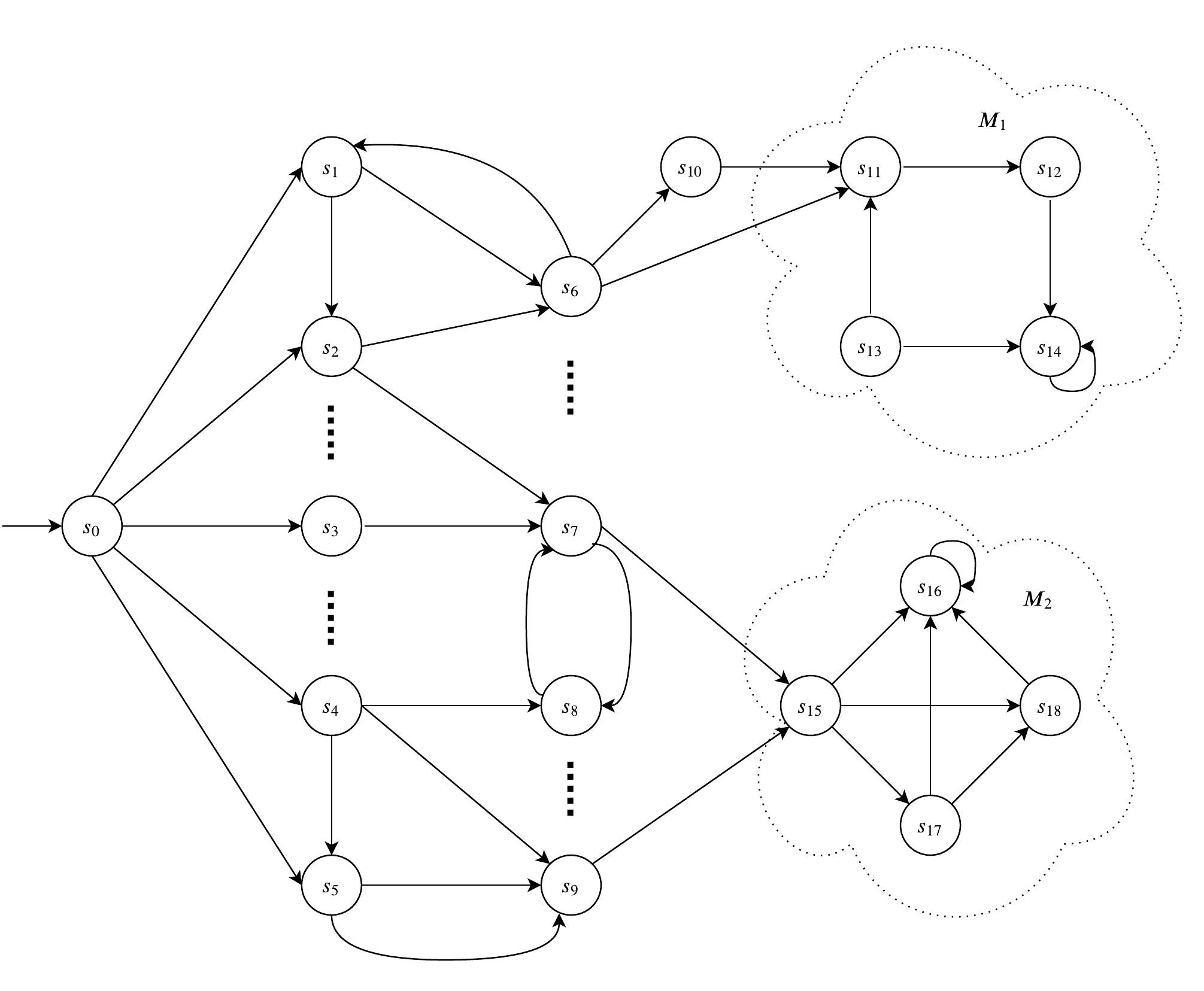}
    \caption{Example of directed graph $G$ for the input Markov chain $M$}
    \label{flower}
\end{figure}

%%%%%%%%%%%%%%%%%%%%% Subsection Details%%%%%%%%%%%%%%%%%%%%%%%%%%%%%%%%%%%%%
\subsection{Details}
\algrenewcommand\algorithmicindent{0.75em}%
\begin{algorithm}
	\caption{Bouquet Algorithm }
	\label{algorithm}
	\begin{multicols}{2}
	\begin{algorithmic}{\small}
		\State \textbf{Function:} Bouquet($M, \Phi,k, N_B,rProb$)
		\State $res \gets 0$
		\For{$i=0$ to $N_B$}
			\State $c\_s \gets s_{init}$; $step \gets 0$
			\State array $A \gets \emptyset$
			\While{$step\leq maxPathLength$}
				\If{$c\_s \vDash b$}
					\State $res \gets res +1$
				\ElsIf{$c\_s \vDash \lnot a \land \lnot b$}
					\State break
				\ElsIf{$c\_s.isAnnotated()$}
					\If{$c\_s.annotationValue$}
						\State $res \gets res+ getNMCresult(c\_s)$
					\EndIf
				\ElsIf{$random(0,1) \leq rProb$}
					\If{$isFlower(M,c\_s,k)$}
						\State $c\_s \gets findFlowerhead(M,c\_s,k,A)$
						\State $A \gets \emptyset$
						\State $M' \gets getFlower(M,c\_s,k)$
						\State $res \gets res+ doNMC(M',c\_s,\Phi)$
					\Else
						\ForAll{$s \in A$}
							\State $s.annotationValue=False$
						\EndFor
						\State $A \gets \emptyset$
					\EndIf 
				\Else
					\State $A.add(c\_s)$
					\State $step\gets step +1$
					\State $c\_s \gets getNextState(c\_s)$
				\EndIf
			\EndWhile
		\EndFor
		\State $res \gets res/N_B$
		\State \Return $res$
		\\\\ \\\\ \\\\
		
		\State \textbf{Function:} isFlower($M, s, k$)
		\If{$s.isAnnotated()$}
			\State \Return $s.annotationValue$
		\EndIf	
		\State $list , stack \gets neighbours(s)$
		\State $c \gets size(list)$
		\While{$c <k \land \lnot stack.isEmpty() $}
			\State $t \gets stack.pop()$
			\ForAll {$i \in neighbours(t)$}
				\If{$\lnot list.contains(i)$}
					\State $c \gets c +1$
					\State $list.add(i)$ ; $stack.push(i)$
				\EndIf
			\EndFor
		\EndWhile
		
		\If{$c<k$}
			\State \Return $True$
		\Else
			\State \Return $False$
		\EndIf
		\\
		\State \textbf{Function:} findFlowerhead($M,c\_s,k, A$)
		\State $l \gets 0$;$h \gets A.size()$
		\State $flowerhead \gets c\_s$
		\While{$l \leq h$}
			\State $s \gets A[\lfloor\frac{l+h}{2}\rfloor]$
			\If{isFlower(M,s,k)}
				\For{$i=\lfloor\frac{l+h}{2}\rfloor$ to $h-1$}
					\State $A[i].annotationValue=True$
				\EndFor
				\State $h \gets \lfloor\frac{l+h}{2}\rfloor-1$
				\State $flowerhead \gets s$
			\Else
				\For{$i=l$ to $\lfloor\frac{l+h}{2}\rfloor$}
					\State $A[i].annotationValue=False$
				\EndFor
				\State $l \gets \lfloor\frac{l+h}{2}\rfloor+1$
			\EndIf
		\EndWhile
		\State \Return $flowerhead$
	\end{algorithmic}
	\end{multicols}
\end{algorithm}

We now describe the bouquet algorithm in detail.
Given an input DTMC $M=(S,s_{init},\mathbb{P},AP,L)$ with $n=|S|$ as the 
number of states in $M$  and an input (say, PCTL) formula $\Phi= Pr_{=?}[a\ U \ b]$ with $a,b \in AP$,
the Bouquet algorithm described in Algorithm \ref{algorithm} 
estimates the probability of state $s$ satisfying the formula $\Phi$.
The algorithm also takes as input the total number of sample runs it needs to 
execute, $N_B$, the size of the flower $k$ for the model and a probability $rProb$ of searching for a flower in $M$.
We use Chernoff-Hoeffding bound~\cite{hoeffding1994,herault2004} to calculate 
the number of required samples. The number of samples needed for SMC for given 
approximation parameter $\epsilon$ and confidence $\delta$ is 
$N_s \geq \frac{ln(\frac{2}{\delta})}{2\epsilon^2}$. We empirically decide the value of $N_B$ 
as a fraction of $N_s$ such that same approximation and confidence values is achieved.
We will discuss how to fix $k$ in the next subsection.
 
The bouquet algorithm begins by sampling traces from the input Markov chain $M$, 
as in the case of SMC.
At every execution step of a trace, we first check if the unbounded until formula 
evaluates to either true or false at the current state $s$. If so, this trace is deemed successful, and not continued further.
Otherwise, before visiting the
next state in the trace, we check with a probability $rProb$ if a flower is present at $s$. 
If no flower exists at $s$ or if a flower is not searched for at $s$, then the next state in the trace is traversed.
However, if a flower is encountered at $s=s_F$, then the bouquet algorithm computes 
the exact probability of $s_F$ satisfying the unbounded until formula, through numerical model checking and 
annotates this probability in the  DTMC $M$. This allows 
re-usability of previously computed results--if another trace visits the state $s$ later, then the computed probability can be directly used, 
instead of generating additional traces as in the case of SMC.  

The algorithm for identifying a flower $M'$ in $M$ is described in the $isFlower$ 
function of Algorithm \ref{algorithm}. The function first checks if a 
reachability computation has already been done on the current state $s$. If so, 
it directly returns the result. Otherwise, it identifies the number of reachable states 
from $s$. If this number is small enough, that is, less than $k$, then we extract 
these states from $M$ to create another DTMC $M'$ with $s_F$ as the 
initial state.  $M'$ is simply the DTMC corresponding to the subgraph induced by the vertices reachable from $s_F$.
If the number of reachable states from $s$ is not less than $k$, 
then it returns False. 
Then, we continue with the execution of the trace until either a conclusive result or a flower is found.   

The function $isFlower$ is called with a probability of $rProb$ at every state in trace generated by the bouquet algorithm.
Thus, with probability $1-rProb$, we skip the step of searching for a flower at the current state. We add into an array $A$
such states which are traversed in a trace but exempted from flower search. Whenever a flower is found
in the future, the bouquet algorithm calls the function $findFlowerhead$ to annotate the states in the array $A$
and find a possibly larger flower of size less than $k$. The function $findFlowerhead$ uses binary search 
to identify this possibly larger flower.

\subsection{Fixing $k$}

We desire to fix the size $k$ of the flower through the following analysis.
Note that this only provides a heuristic. By relaxing or tightening some of the 
assumptions depending on the underlying DTMC, one can arrive at a different $k$.

Let $n$ be the number of vertices in the underlying directed graph $G$ 
of Markov Chain $M$ and let the density of the graph $G$ be $\rho$ such that 
$G$ is a sparse graph:
\[\rho = \frac{\textrm{No. of edges in G}}{n(n-1)}<<1.\] 

In what follows, we assume that the $\rho$ is uniform across the graph--for any induced subgraph with at least $k$ nodes, the density remains
unchanged.
Let $G'$ be the underlying directed graph of the flower Markov chain $M'$.
Then by the definition of the flower $M'$, number of vertices in $G'$ is at most $k$. Then,
\begin{equation}
	\textrm{No. of edges in G'} = \rho k(k-1)
\end{equation}
If size of graph $G'$ is $k$, then $k-1$ vertices are reachable from the initial vertex in $G'$.
Thus, there exists at least $k$ edges in $G'$ ($k-1$ to ensure reachability and $1$ to ensure stochastic
property of the Markov Chain $M'$).  Thus, the graph $G'$ consists of at most $k$ vertices and at least $k$ edges. 
Then, 
\begin{equation}
	\rho k > 1
\end{equation}
Thus $\rho k-1>0$.
While the Bouquet algorithm calculates the probability of $M'$ satisfying the unbounded until 
formula using numerical model checking algorithm, statistical model checking algorithm would have to generate
multiple traces in $M'$, for comparable accuracy. 
We now estimate a lower bound on the number $r$ of unique traces of length $L$ that the SMC algorithm would have to sample.
To begin with, note that there exists at least one path connecting all the $k$ states in $M'$. 
Further, addition of one edge in $G'$ leads to addition of at least one unique path of length $L$ 
in $M'$. Since $G'$ contains $\rho k(k-1)$ edges, then there exists at least $\rho k(k-1) - k + 1 $ paths in the $M'$.
Thus,
\begin{equation}
	r \geq (\rho k -1)(k-1)
\end{equation}  
%Now, for numerical model checking of the unbounded until formula for the flower $M'$ in $M$, there exists at least $r$ 
%unique paths in $M'$ to verify the same formula statistically. 

%%%%%%%%%%%%%%%%%%%%%%%%%%%%%%%%%%%%%%%%%%%%%%%%%%%%%%%%%%%%%%%%%%%%%%%%
The cost of identifying and verifying a flower $M'$ of size $k$ using bouquet algorithm is:
\begin{equation}
\textrm{Cost}_B= O{(k^2)} + O(k^3)  +O(kn) = c_1 k^3 +c_2 k^2 + c_3kn +c_4
\end{equation}
The terms $O(k^2)$ and $O(k^3)$ correspond to the cost for the  precomputation steps and matrix multiplication
in NMC for a model of size $k$ respectively and $O(kn)$ for reachability search in a sparse graph.

Similarly, if $c_s$ is the cost any statistical model checking algorithm spends on a trace with maximum allowed length,
 then the computation cost of verifying a flower $M'$ using the  SMC algorithm is at most $c_sr$ where
 $r$ is the number of unique paths.
\begin{equation*}
        \begin{array}{ccc}
                \textrm{Cost}_S & \geq & c_sr = c_s(\rho'k -1)(k-1).
        \end{array}
\end{equation*}
Thus, $\textrm{Cost}_S = c_s(\rho k-1)(k-1)+c_5$.

We choose the size of the flower $k$ such that
\begin{equation*}
        \begin{array}{ccc}
                \textrm{Cost}_S & \geq &\textrm{Cost}_B \\
                c_s (\rho k -1)(k-1) +c_5 & \geq &c_1 k^3 + c_2k^2 +c_3 kn +c_4
                \end{array}
\end{equation*}
           \[     c_1k^3 + (c_2-c_s\rho) k^2 + (c_3 n +c_s(1+\rho)) k +c_4-c_s-c_5  \leq 0\]
Substituting with $C_1$ for $(c_2-c_s\rho)/c_1$, $C_2$ for $c_3/c_1$, $C_3$ for $c_s(1+\rho)/c_1$ and $C_4$ for $(c_4-c_s-c_5)/c_1$, we get
\begin{equation}
                k^3 + C_1 k^2 + (C_2 n +C_3) k +C_4 \leq 0  
        \end{equation}

Solving, we get $k\approx \sqrt{n}$. Indeed, we use $k=\sqrt{n}$ for the experiments reported in section 4.

%%%%%%%%%%%%%%%%%%%%%%%%%%%%%%%%%%%%%%%%%%%%%%%%%%%%%%%%%%%%%%%%%%%%%%%%
It is important to note that if a flower is not encountered, then both statistical algorithms and the bouquet algorithm will end up traversing 
upto $maxPathLength$. 

\subsection{Correctness}
We now show that for the same number of samples generated at the starting state of a DTMC $M$, the Bouquet algorithm is at least
as accurate as the $SMC$ algorithm.

\begin{theorem}
Let $p=Prob[aUb]$. 
For $N_B\geq \frac{\ln\frac{2}{\delta}}{2\epsilon^2}$ samples and $\epsilon, \delta>0$ chosen as in the SMC algorithm, 
let $p'$ be the approximation computed by the Bouquet algorithm of $p$. Then, $Prob[|p-p'|\leq\epsilon]\geq 1-\delta$.
\end{theorem}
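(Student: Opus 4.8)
The plan is to show that the Bouquet algorithm, viewed as a random variable producing $p'$, is an unbiased estimator of $p$ with per-sample values bounded in $[0,1]$, so that the same Chernoff-Hoeffding bound that justifies the SMC sample count $N_s$ applies verbatim to $N_B$. First I would set up the probabilistic model: each of the $N_B$ outer iterations generates one contribution to \texttt{res}, and I would define $X_i$ to be that contribution for iteration $i$, so that $p' = \frac{1}{N_B}\sum_{i=1}^{N_B} X_i$. The crux is to argue that each $X_i$ is an independent draw with $\mathbb{E}[X_i]=p$ and $0\le X_i\le 1$; given this, Hoeffding's inequality gives $Prob[|p-p'|\le\epsilon]\ge 1-\delta$ for exactly the stated $N_B$, identically to the SMC derivation.

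The main work is establishing $\mathbb{E}[X_i]=p$, and here I would do a case analysis matching the branches of Algorithm~\ref{algorithm}. In the ordinary SMC branches---where the trace reaches a state satisfying $b$ (contributing $1$), or a state satisfying $\lnot a\land\lnot b$ (contributing $0$ and breaking)---the contribution is exactly the indicator $b_i$ that an ordinary SMC run would produce, whose expectation is $p$ by definition of the unbounded-until semantics. The new branches are where the Bouquet algorithm departs from SMC: when a flower rooted at some $s_F$ is found (or an annotated state is hit), the algorithm replaces the remaining random continuation of the trace by the \emph{exact} probability, computed by numerical model checking on the induced flower $M'$, that a path from $s_F$ satisfies $a\,U\,b$. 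I would argue by the strong Markov property / memorylessness of the DTMC that, conditioned on the trace reaching $s_F$ along a prefix of $a$-states, the probability that the full path satisfies $a\,U\,b$ equals exactly this numerically computed value $\mathrm{Pr}[s_F\models a\,U\,b]$. Hence substituting the exact conditional probability in place of sampling the tail does not change the expectation: it is a conditioning (Rao--Blackwellization) step that preserves $\mathbb{E}[X_i]=p$ while only reducing variance. The correctness of the numerical model checking subroutine itself (that \texttt{doNMC} and \texttt{getNMCresult} return the true $\mathrm{Pr}[s_F\models a\,U\,b]$ on $M'$) I would cite from the numerical model checking preliminaries, and I would note that $M'$ captures all states reachable from $s_F$, so no satisfying or rejecting continuation is lost by restricting to $M'$.

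Two subtleties I would address explicitly. First, I must confirm $0\le X_i\le 1$: the SMC branches yield $0$ or $1$, and the flower branch yields a probability in $[0,1]$, so the bound holds for every branch and hence for $X_i$. Second, I would verify that the annotation mechanism and the \texttt{findFlowerhead} binary search do not corrupt the expectation---annotations merely cache exact per-state probabilities or the correct False verdict (for states from which no flower exists but which lie on a rejecting/non-flower prefix), so reusing them across iterations substitutes the same exact conditional value and keeps each $X_i$ unbiased, with cross-iteration reuse affecting only efficiency and not the marginal distribution of each $X_i$.

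The step I expect to be the genuine obstacle is the expectation-preservation argument for the flower branch, because it requires care about \emph{what is being conditioned on}. One must ensure that reaching $s_F$ is itself conditioned on the prefix consisting entirely of $a$-satisfying, $b$-failing states (otherwise the trace would have already been resolved), so that the event ``path satisfies $a\,U\,b$'' factors cleanly as ``prefix stays in $a$'' times ``continuation from $s_F$ satisfies $a\,U\,b$''; the latter is precisely what NMC on $M'$ computes. I would also need to handle the \texttt{maxPathLength} cutoff honestly: the Bouquet algorithm inherits the SMC convention that an unresolved trace is counted as False, so $X_i$ and $b_i$ share the same (possibly slightly biased relative to the true $p$) treatment of truncation, and the theorem's claim is the conditional statement ``as accurate as SMC for the same $N_B$,'' which is exactly the Hoeffding conclusion once $\mathbb{E}[X_i]=\mathbb{E}[b_i]$ is established under the common convention.
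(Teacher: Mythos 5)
Your proposal is correct in substance, but it proves the theorem by a genuinely different route than the paper. The paper's proof is a \emph{relative-accuracy} argument: it partitions the traces an SMC run would generate into four classes (longer than $maxPathLength$; resolved before any flower; hitting an already-annotated flower; hitting a fresh flower), observes that the Bouquet algorithm behaves identically to SMC on the first two classes and replaces a sampled estimate by the exact NMC value on the last two, and concludes that the Bouquet estimate is at least as accurate as the SMC estimate for the same number of samples, so SMC's Chernoff--Hoeffding guarantee transfers. Your proof is \emph{direct}: you define the per-iteration contribution $X_i$, show $X_i\in[0,1]$ and $\mathbb{E}[X_i]=p$ via the Markov property (the Rao--Blackwellization observation that substituting the exact conditional probability at $s_F$ preserves the mean and can only reduce variance), and then apply Hoeffding's inequality for bounded --- not merely Bernoulli --- random variables to get the stated bound for $N_B$. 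This buys real rigor the paper lacks: the paper never defines ``accuracy,'' never notes that the Bouquet estimator is no longer $\{0,1\}$-valued (so the Bernoulli form of Chernoff--Hoeffding does not apply verbatim and the bounded-variable form is genuinely needed), and never isolates the conditioning argument you make explicit. One caveat in your write-up: your claim that annotation reuse leaves ``the marginal distribution of each $X_i$'' unchanged is not quite right --- once iteration $1$ annotates a flower, iteration $2$'s trace can terminate earlier than a fresh run would, so the $X_i$ are neither independent nor identically distributed; what \emph{is} preserved is the conditional expectation $\mathbb{E}[X_i\mid \text{past}]=p$ (modulo the shared truncation convention), so you should invoke the Azuma--Hoeffding inequality for martingale differences bounded in an interval of length $1$, which yields exactly the same exponential bound. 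This is a minor, fixable gap --- and one the paper's own proof does not even acknowledge.
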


\begin{proof}
Consider the standard SMC algorithm that generates $N$ samples. We argue that for these many samples, the Bouquet algorithm is at least
as accurate as the SMC algorithm.

In the context of the Bouquet algorithm, the traces generated by the SMC algorithm can be partitioned into the following classes:
\begin{enumerate}
\item Traces that are longer than $maxPathLength$: both algorithms behave identically.
\item Traces where $aUb$ is evaluated to True or False before the Bouquet algorithm encounters a flower: again, both algorithms behave identically.
\item Traces where the Bouquet algorithm hits a flower that has already been evaluated and annotated: the bouquet algorithm stops and reports the
probability. The SMC algorithm continues generation of the sample trace.
\item Traces where the Bouquet algorithm encounters an un-annotated flower $F$ before $aUb$ is evaluated to True or False:

Let the state where the trace encounters the flower be $s_F$. Then, the Bouquet algorithm calculates $Prob[aUb]$ exactly for the flower with initial
state $s_F$ using NMC. On the other hand,  
for this fragment of the DTMC, the SMC algorithm approximates $Prob[aUb]$ as
$p_{SMC}= \frac{\sum_r b_{F,r}}{r}$, where $r$ indexes the traces starting at $s_F$ into the flower $F$, explored by the SMC algorithm.
$b_{F,r}=1$ if the trace $r$ satisfies $aUb$ and $0$ otherwise. Please see Fig~\ref{trace1} for an illustrative example.
\end{enumerate}

Since the Bouquet algorithm obtains the exact probability for the flower, while the SMC algorithm approximates it, and the accuracy for all other classes
of traces is identical, the accuracy of the Bouquet algorithm is greater than that of the SMC algorithm for $N$ samples. The theorem follows.

\end{proof}

\begin{remark}

A trace in the Bouquet algorithm is of length at most $maxPathLength$. In a trace, the algorithm performs at every state, with probability $rProb$, 
a reachability test costing $O(k)$, and potentially a numerical model checking procedure (if a flower is encountered), of cost $(NMC)_k$. 
Thus, with for $N_B$ such samples, 
the worst case time complexity of the Bouquet algorithm is 

$O(N_B(rProb).(maxPathLength).k.(NMC)_k)$. 

In practice, the number $N_B$ turns out to be a fraction of $N_s$, the number of samples that need to be generated for the same accuracy by the 
SMC algorithm. 
Experimental evidence suggests that the running time of the Bouquet algorithm is faster for the same 
accuracy (please see section 4).
\end{remark}

\begin{figure}[ht]
    \centering
    \includegraphics[width=0.7\linewidth]{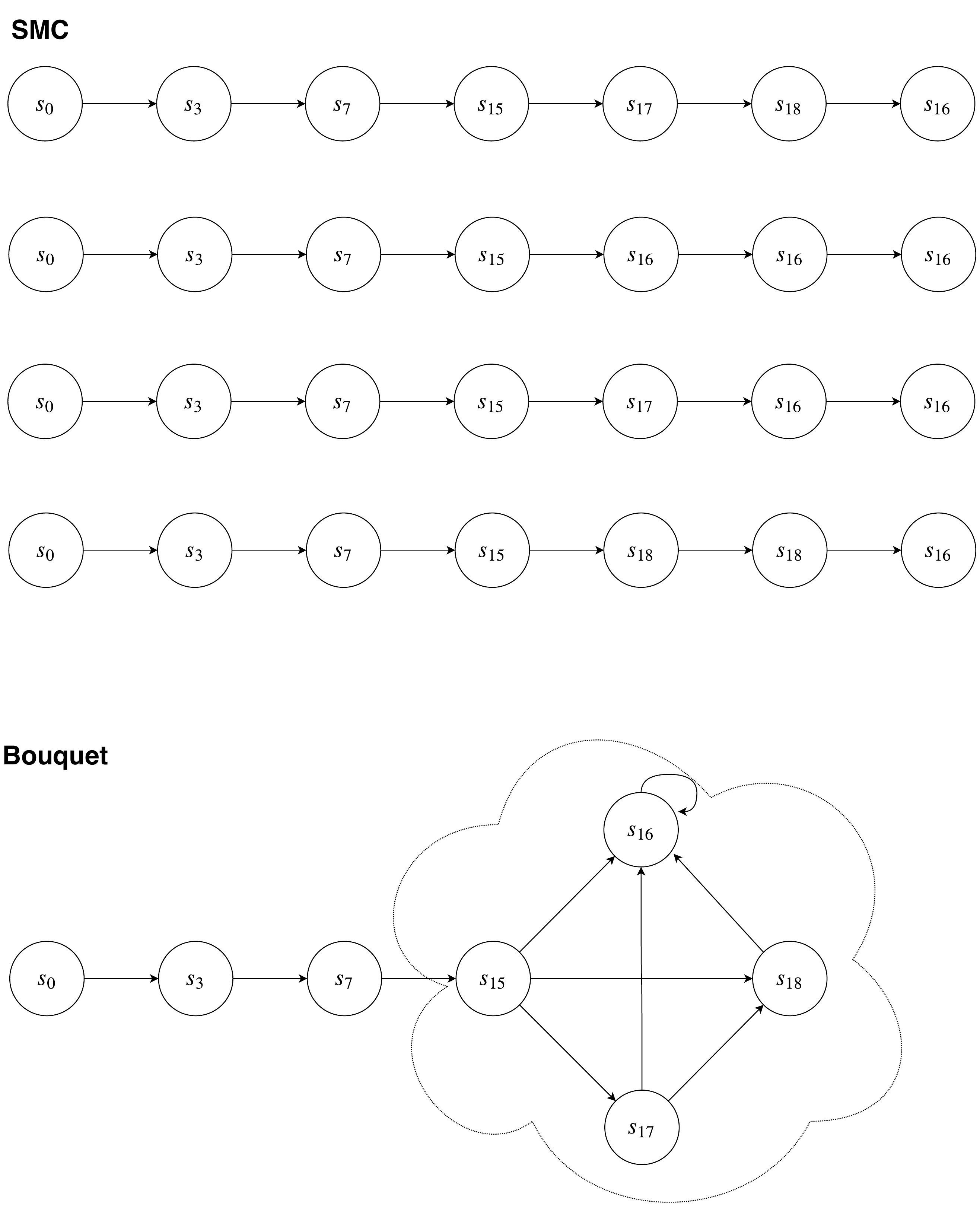}
    \caption{Trace executions for SMC and Bouquet algorithm. Note that $s_{15}$ is the starting state $s_F$ for the flower.}
    \label{trace1}
\end{figure}

\subsection{Savings in I/O Operations}
A potential application of this algorithm is in reducing page-swaps while model checking extremely large DTMCs. In what follows we call the fast,
solid-state based memory as \emph{RAM} and slower memory as the \emph{disk}. Suppose the RAM is $O(d^c)$ bits large for some constant $c$, 
while the disk is $D$ bits large, where $D>>d$. Further, suppose the DTMC is so large that the neighborhood of a node occupies $\Omega(d)$ bits, in some representation. Every time a neighbor
has to be chosen while generating a sample trace, the entire neighborhood has to be retrieved from the disk to the RAM. Therefore, for a sample of length $l$, we need to access
at least $l$ neighborhoods. For $N$ runs, we need $Nl_{avg}$ disk-RAM I/O operations, where $l_{avg}$ is the average sample length. For a completely annotated DTMC in the Bouquet algorithm, 
if the number of samples that involve a flower is $N'$ and the average length of the stalk is $l'_{avg}$, the number of I/O operations for such samples is $N'l'_{avg}$.
This results in an average saving of $Nl_{avg}-(N-N')l_{avg}-N'l'_{avg}-k=N'(l_{avg}-l'_{avg})-k$ I/O operations, where $k$ is the 
size of the flowers in the DTMC. 
The $k$ additional I/O transfers are to construct the flower. Since in this discussion we are concerned with I/O operations, we ignore the 
overhead of NMC on the flower; we assume that the NMC on the flower takes place in the RAM, which has size $O(d^c)$.

\section{Implementation and Results}\label{experiments}

We implemented the Bouquet algorithm, discussed in Section \ref{bouquet}
as a Java tool to compare its performance with standard Numerical and 
Statistical model checking algorithms%~(\url{https://github.com/shirajarora/bouquetUntil})
. We implemented the algorithm discussed in 
\cite{HanssonJ94, ModelCheck} for numerical model checking module in the bouquet algorithm.
The number of samples needed for statistical model checking is calculated using
Chernoff-Hoeffding bound~\cite{hoeffding1994,herault2004}. We empirically observed that lesser 
number of samples are needed for achieving same approximation $\epsilon$ and confidence $\delta$ 
using Bouquet algorithm, by a factor of $ \sim 0.7$. In other words, if SMC algorithms need $N_S$ samples 
to achieve ($\epsilon,\delta$), Bouquet algorithm needs $N_B\simeq 0.7N_S$ samples. 

We take as inputs sparse Markov chains with density $\rho$ and $n$ states. For a fixed value of $n$ and $\rho$, 
we randomly generated 20 different Markov chains with varying transition probability matrices.
We repeat this for different values of $n$ and $\rho$. In all the experiments, we use 
$k=\sqrt{n}$ and $rProb=0.01$.

We performed two types of experiments using the Bouquet algorithm. First, we take as an input a 
pre-annotated Markov chain wherein the number of reachable states from each state in a Markov
chain $M$ is known beforehand. This saves the bouquet algorithm the cost of 
computing reachability. For such pre-annotated Markov chains, the computation cost is 
mainly due to the numerical model checking of the flower Markov chains. 

In the second of set of experiments, the Bouquet algorithm calculates the reachability for each state on the fly 
and stores it for future visits to the state. The computation cost in these experiments is due to 
finding the number of reachable states as well as from the NMC of flower Markov chains.
 In these experiments, we observed that the first few samples
in Bouquet algorithm are more expensive in comparison to the samples executed towards the end. This is
not surprising because  more reachability information is available towards the end of simulation.  

Figure \ref{result1} shows the average time taken by the Bouquet algorithm with and without annotation of
the reachability for different number of states in the Markov chain. The density $\rho$ of these graphs is $0.05$.
 We also compare this to the time taken by SMC algorithm. We took 15 batches of 1000 sample runs for the SMC algorithm whereas
700 samples for both versions of Bouquet algorithm. As can be seen, when the graph is annotated completely (either exclusively
for this unbounded until query, or due to evaluations of previous queries), the Bouquet algorithm performs better than the SMC algorithm.
However, as expected, when the graph is not pre-annotated, it performs slightly worse than SMC.

\begin{figure}[ht!]
    \centering
    \includegraphics[width=0.8\linewidth]{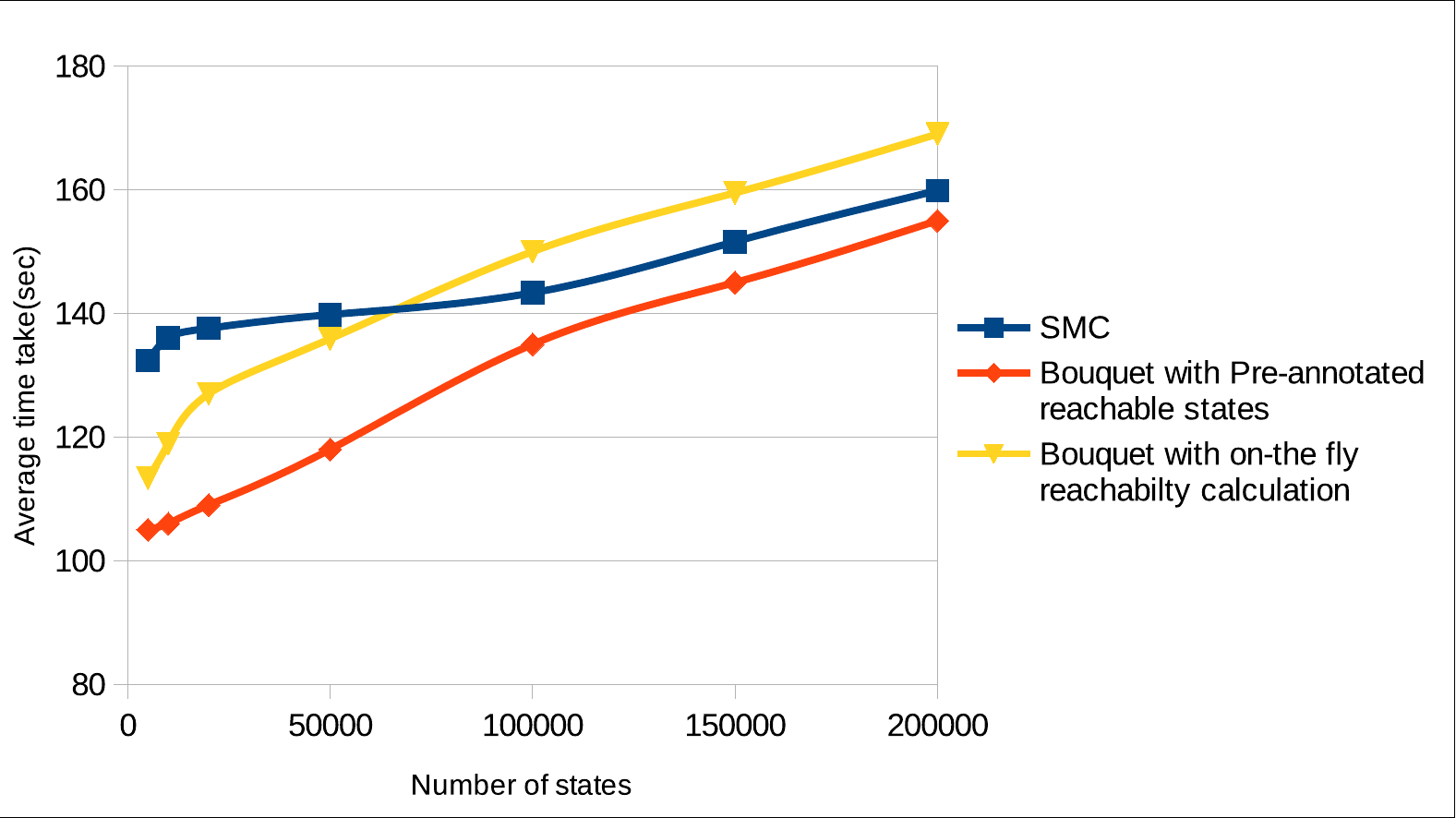}
    \caption{Comparison of average time taken by different algorithms for density $\rho=0.05$ }
    \label{result1}
\end{figure}
 
%It is important to note that while the Bouquet algorithm takes comparable (sometimes even more) time than
%the  SMC algorithm, it provides better accuracy. The numerical computation for verifying the flower Markov 
%chains decreases the probability of incorrect result significantly, as opposed to the sampling within the flower done 
%by statistical model checking.

In the case where reachable states are identified on the fly, the reachability results 
from verification of one unbounded formula can be directly reused 
during the verification of another unbounded until formula. This will be useful if the user wants to
verify multiple unbounded until formulas for the same system over time. Figure \ref{result2} shows the comparison of
average time taken by the Bouquet algorithm to consecutively verify three different unbounded until 
formulas on the same model. It is evident from the plot that as the graph gets progressively annotated, the time taken to check 
the unbounded until formula drops.

\begin{figure}[ht]
    \centering
    \includegraphics[width=0.7\linewidth]{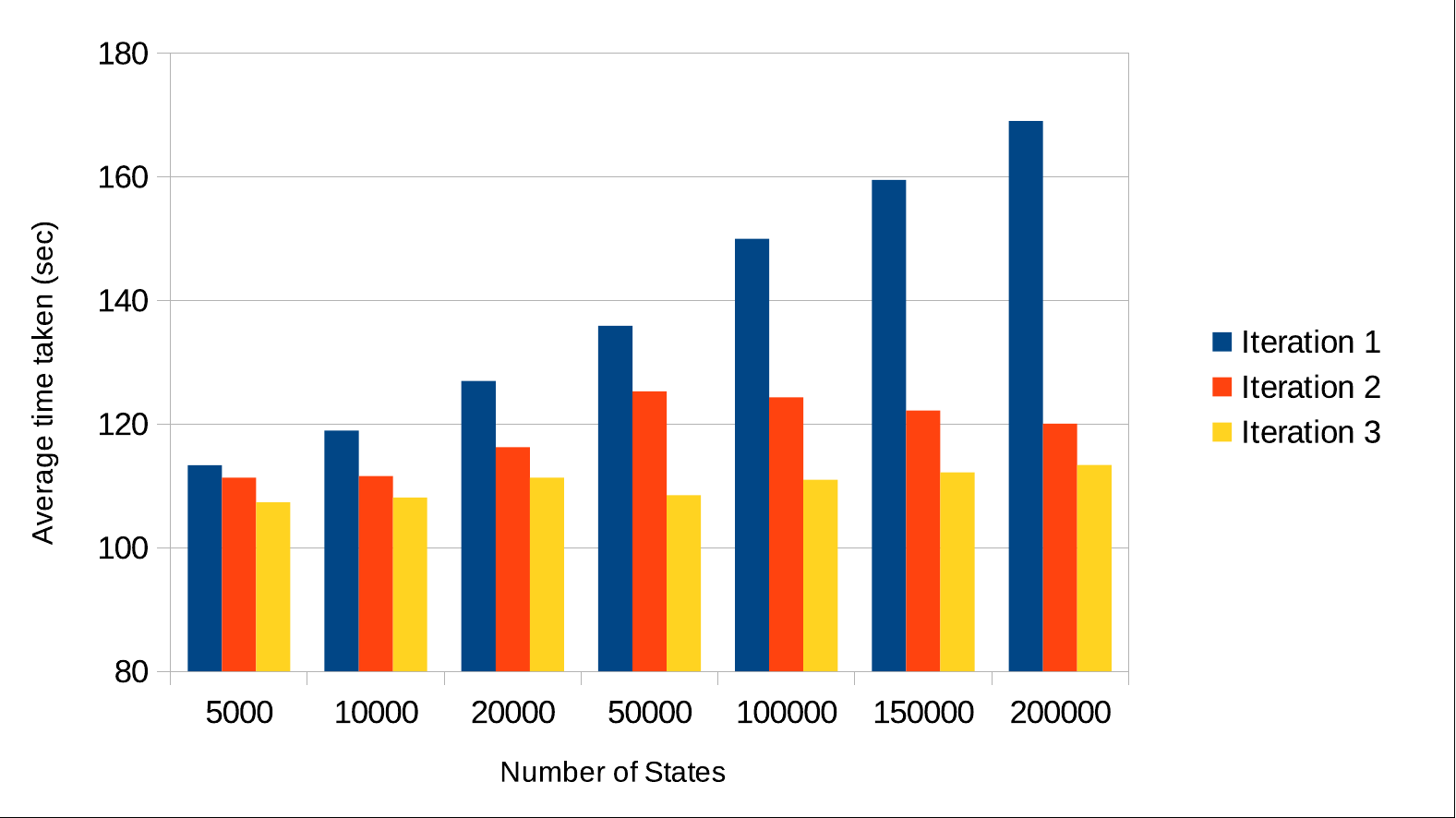}
    \caption{Average time taken to consecutively verify different unbounded until formulas}
    \label{result2}
\end{figure}

We compare the performance of on-the-fly bouquet algorithm with statistical model checking for different values of $\epsilon$, the 
approximation parameter. Figure~\ref{result4} shows the results for this experiment. We see that with increase in the desired accuracy, the bouquet 
algorithm outperforms the statistical model checking algorithm significantly. The reason is that the overhead of additional samples that the SMC
algorithm needs to generate for achieving greater accuracy is greater than the numerical model checking cost in the Bouquet algorithm.

\begin{figure}[ht]
    \centering
    \includegraphics[width=0.7\linewidth]{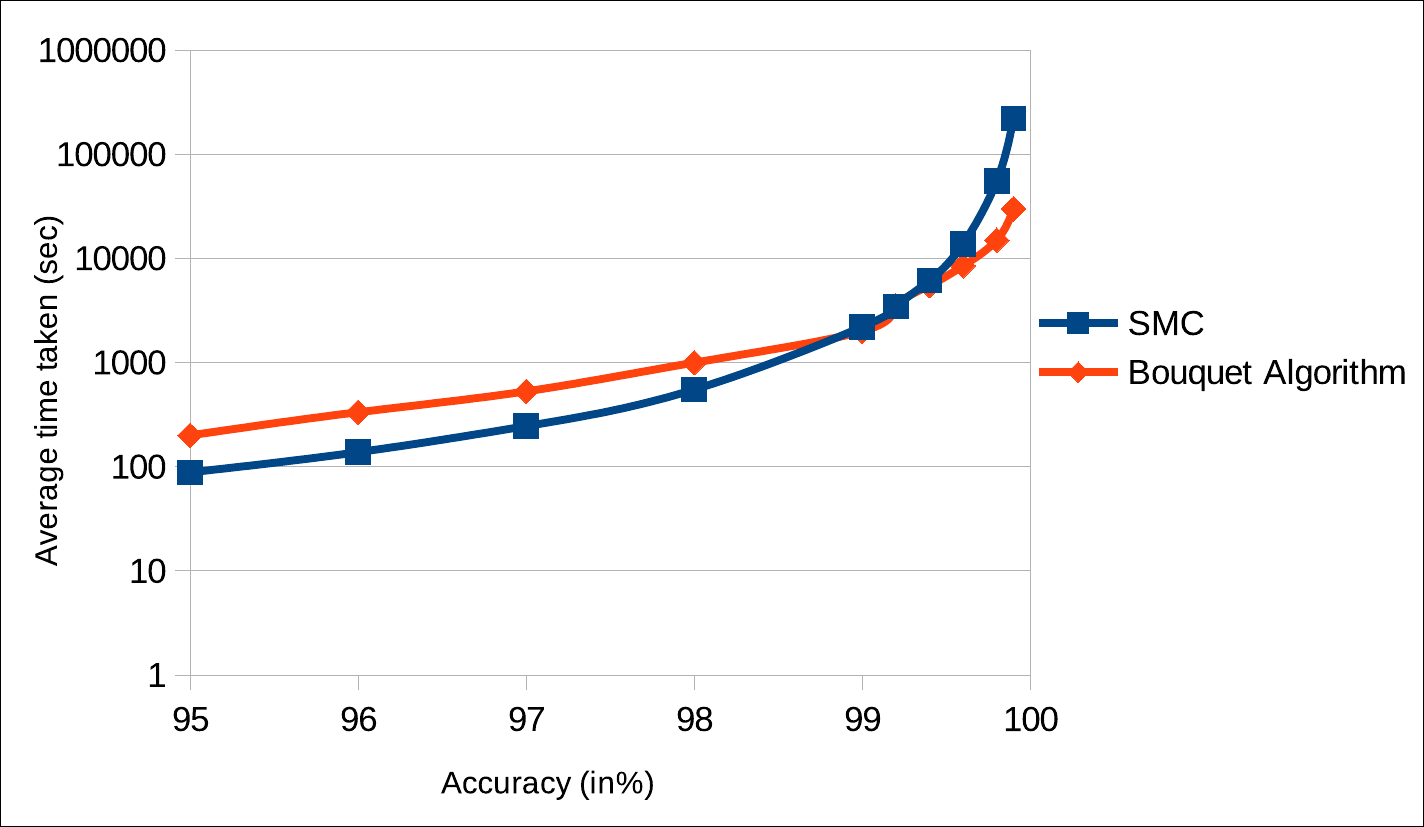}
    \caption{Accuracy vs Time in Bouquet Algorithm}
    \label{result4}
\end{figure}

In the case of fully annotated but dense graphs, the Bouquet algorithm converges to the SMC algorithm, because of the abundance of long sample paths that do
not end in small sized flowers. In all cases, however, both SMC and Bouquet algorithms outperform NMC in terms of running time.
Indeed, we observed that the performance of Bouquet algorithm where reachability is annotated on the fly
improves with the increase in sparsity of the underlying directed graph of the Markov chain. Figure \ref{result3} 
illustrates the average time taken to verify Markov chain with $10^5$ states for different densities.

\begin{figure}[ht]
    \centering
    \includegraphics[width=0.7\linewidth]{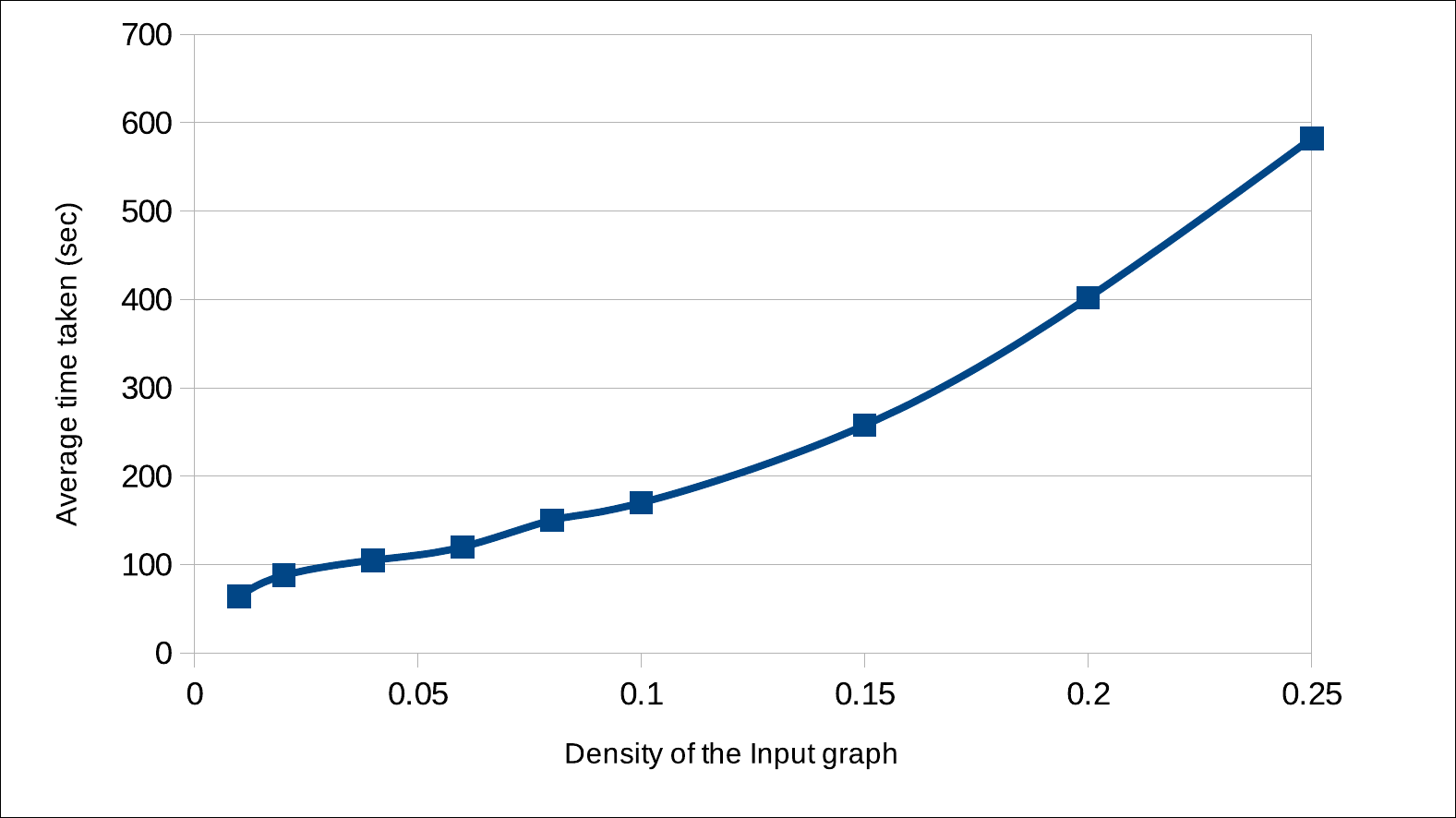}
    \caption{Average time taken by Bouquet algorithm for varying density $\rho$ }
    \label{result3}
\end{figure}

\section{Conclusions and Future work}\label{conclusion}
We discussed a new hybrid algorithm for verifying the unbounded until fragment of temporal logics, using numerical and statistical model checking 
and graph algorithms and demonstrated its effectiveness on sparse DTMCs. 
In particular, we give empirical evidence for improved performance of this approach over the standard statistical model checking algorithm.

As we mention is section 3.4, we believe that this technique could be of immense use when we seek to reduce I/O operations for DTMCs 
explicitly stored on a slow but inexpensive memory. It remains to confirm this conjecture experimentally over different memory architectures. 
Finally, we plan to integrate this into the PRISM model checker and address practical model checking problems. 

\bibliographystyle{splncs04}
\bibliography{bib}
\end{document}